\documentclass[a4paper,oneside,11pt]{article}

\usepackage{array,ifthen,float,a4wide,amsfonts,subfigure,algorithm,graphicx,amssymb,amsmath,latexsym}

\newtheorem{definition}{Definition}
\newtheorem{theorem}{Theorem}

\newtheorem{lemma}{Lemma}

\floatstyle{ruled}
\newfloat{Algorithm}{thp}{lop}[section]

\newenvironment{proof}{\noindent{\textbf{Proof:}}}{\hfill$\Box$}


\begin{document}

\title{Limiting Byzantine Influence in Multihop Asynchronous Networks}

\author{Alexandre Maurer and S\'{e}bastien Tixeuil\\
UPMC Sorbonne Universit\'{e}s, Paris, France\\
Email: Alexandre.Maurer@lip6.fr, Sebastien.Tixeuil@lip6.fr
}

\maketitle

\begin{abstract}
We consider the problem of reliably broadcasting information in a multihop asynchronous network that is subject to Byzantine failures. That is, some nodes of the network can exhibit arbitrary (and potentially malicious) behavior. Existing solutions provide deterministic guarantees for broadcasting between \emph{all} correct nodes, but require that the communication network is highly-connected (typically, $2k+1$ connectivity is required, where $k$ is the total number of Byzantine nodes in the network).

In this paper, we investigate the possibility of Byzantine tolerant reliable broadcast between \emph{most} correct nodes in low-connectivity networks (typically, networks with constant connectivity). In more details, we propose a new broadcast protocol that is specifically designed for low-connectivity networks. We provide sufficient conditions for correct nodes using our protocol to reliably communicate despite Byzantine participants. We present experimental results that show that our approach is especially effective in low-connectivity networks when Byzantine nodes are randomly distributed.
\end{abstract}

\section{Introduction}

In this paper, we revisit the problem of reliably broadcasting information in an arbitrary shaped network that is subject to Byzantine failures. As distributed systems and networks grow larger, faults and attacks are more likely to appear and resiliency to those faults needs to be adressed in the very early design stages of the protocols that target those systems. One of the strongest fault models is \emph{Byzantine} \cite{LSP82j}: the faulty node behaves arbitrarily. This model encompasses a rich set of fault scenarios. Moreover, Byzantine fault tolerance has security implications, as the behavior of an intruder can be modeled as Byzantine. However, in most studies to date, Byzantine faults are considered in completely connected networks.

One approach to deal with Byzantine faults is by enabling the nodes to use cryptographic operations such as digital signatures or certificates. This limits the power of a Byzantine node as a correct node can verify the validity of received information and authenticate the sender across multiple hops. However, this option may not be available. For example, the nodes may not have enough resources to manipulate digital signatures. Moreover, cryptographic operations implicitly assume the presence of a trusted infrastructure: secure channels to a key server or a public key infrastructure. Establishing and maintaining such infrastructure in the presence of Byzantine faults may be problematic.

Another way to limit the power of a Byzantine process is to assume synchrony: all processes proceed in lock-step. Indeed, if a process is required to send a message with each pulse, a Byzantine process cannot refuse to send a message without being detected. However, the synchrony assumption may be too restrictive for practical systems.

\paragraph{Related works} Many recent Byzantine-robust protocols make use of \emph{cryptography} (see \cite{CL99c,DFS05c} and references herein) to contain the influence of Byzantine nodes. As previously stated, this requires a trusted infrastructure that we do not assume. 

Cryptography-free Byzantine failures have first been studied in completely connected networks~\cite{LSP82j,AW98b,MMR03j,MRRS01c,MS03j}: a node can directly communicate with any other node. If the underlying network is not completely connected (as are most networks), such a setting raises the problem to reliably transmit information between nodes that are not direct neighbors. 

In practice, broadcasting a message in the network requires to rely upon other nodes. Dolev~\cite{D82j} considers Byzantine agreement on arbitrary graphs. He states that, for agreement in the presence of up to $k$ Byzantine nodes, it is necessary and sufficient that the network is $(2k+1)$-connected, and that the number of nodes in the system is at least $3k+1$. Also, this solution assumes that the underlying graph is known to every node, and that nodes are scheduled according to the synchronous execution model. Nesterenko and Tixeuil~\cite{NT09j} relax both requirements (graph is unknown and scheduling is asynchronous) yet retain $2k+1$ connectivity for resilience and $k+1$ connectivity for detection (that is, detecting that there is at least one Byzantine node in the network). In a low-connectivity network such as a torus (where nodes have degree at most four), both approaches can cope only with a single Byzantine node, independently of the torus size. 

Byzantine resilient broadcast was recently investigated in the context of \emph{radio networks}: each node is a robot or a sensor with a physical position. A node can only communicate with nodes that are located within a certain radius, called neighbors. Broadcast protocols have been proposed \cite{K04c,BV05c} for nodes organized on a grid (the wireless medium typically induces much more than four neighbors per node, otherwise the broadcast does not work). Both approaches are based on a local voting system, and perform correctly if every node has less than a $1/4\pi$ fraction of Byzantine neighbors. This criterion was later generalized \cite{PP05j} to other topologies, assuming that each node knows the exact topology. Again, in low-connectivity networks, the local constraint on the proportion of Byzantine nodes in any neighborhood may be difficult to assess.

A notable class of algorithms tolerates Byzantine faults with either space~\cite{MT07j,NA02c,SOM05c} or time~\cite{MT06cb,DMT11cb,DMT11j,DMT10cd,DMT10ca} locality. Yet, the emphasis of space local algorithms is on containing the fault as close to its source as possible. This is only applicable to the problems where the information from remote nodes is unimportant such as vertex coloring, link coloring or dining philosophers. Also, time local algorithms presented so far can hold at most one Byzantine node and are not able to mask the effect of Byzantine actions. Thus, the local containment approach is not applicable to reliable broadcast.

\paragraph{Our contribution} All aforementioned results rely on a strong \emph{connectivity} of the communication graph, and on Byzantine proportions assumptions in the network. In other words, tolerating more Byzantine nodes requires an increase of the degree of each node, which can be a heavy constraint on large networks such as a peer-to-peer overlays.

In this paper, we introduce the idea to trade the perfectly reliable communication between correct nodes for the ability to support low-connectivity communication graphs with many Byzantine nodes. Informally, we accept that a small minority of correct nodes are denied reliable communication, provided that a large majority of correct nodes can reliably communicate. Such a loss in safety guarantees looks acceptable to us since, by hypothesis, many Byzantine nodes (with arbitrary behavior) are already present in the system. Also, our results demonstrate that this tradeoff permits to tolerate a high number of Byzantine nodes, even in constant connectivity networks such as grids or torus. Yet, experimental results show that our scheme preserves the communication capabilities of a huge majority of correct processes.

Our approach is not based on voting proportions, but on \emph{control zones} and \emph{authorizations}. Intuitively, control zones act as filters in the network: they limit the diffusion of Byzantine messages.

The sequel of the paper is organized as follows.
In Section~\ref{secdes}, we define a new broadcast protocol based on \emph{control zones}.
In Section~\ref{secthm}, we prove sufficient conditions to achieve reliable communication in a particular subgraph of the network, using our protocol.
In Section~\ref{secexp}, we provide an experimental evaluation of our protocol on \emph{torus} and \emph{grid} shaped networks, with randomly distributed Byzantine failures. Using the sufficient conditions of Section~\ref{secthm}, we evaluate the probability for two randomly choosen nodes to communicate reliably.

\section{Description of the protocol}

\label{secdes}
In this section, we give an informal description of our protocol, set the formal background, and describe how the protocol is locally executed.

\subsection{Informal Description}

\label{informal}

The network is described by a set of processes, called \emph{nodes}. Some pairs of nodes are linked by a \emph{canal}, and can send messages to each other: we call them \emph{neighbors}. The network is \emph{asynchronous}: the nodes can send and receive messages at any time. Our only hypothesis is that any message that is sent is eventually received.

A node may want to broadcast a specific information $m_{0}$ to the the network. For instance, in a sensors network, $m_{0}$ can be a temperature; in a mobile robots network, $m_{0}$ can be the position of the current robot; etc. In a network where all nodes are correct, the following would happen:
\label{classical}
\begin{itemize}
\item A given node $p$ sends a message containing the couple $(p,m_{0})$ to its neighbors,
\item the neighbors of $p$ send $(p,m_{0})$ to their neighbors,
\item and so forth, until every node receives $(p,m_{0})$. Then the entire network knows that $p$ broadcasted the information $m_{0}$.
\end{itemize}

In our setting however, some nodes can be Byzantine and send arbitrary messages. Those messages are potentially malicious. For instance, a Byzantine node can send $(p,m_{1})$, with $m_{1} \neq m_{0}$, to make the network believe that $p$ broadcasted the information $m_{1}$. Therefore, one single Byzantine node can lie about the information of every node, and then deceive the whole network.

To limit the action of Byzantine nodes, we define a set of \emph{control zones}. A control zone is defind by:
\begin{itemize}
\item Its \emph{core}, an arbitrary set of nodes.
\item Its \emph{border}, a node-cut isolating the core from the rest of the network.

\begin{figure}
\begin{center}
\includegraphics[width=7cm]{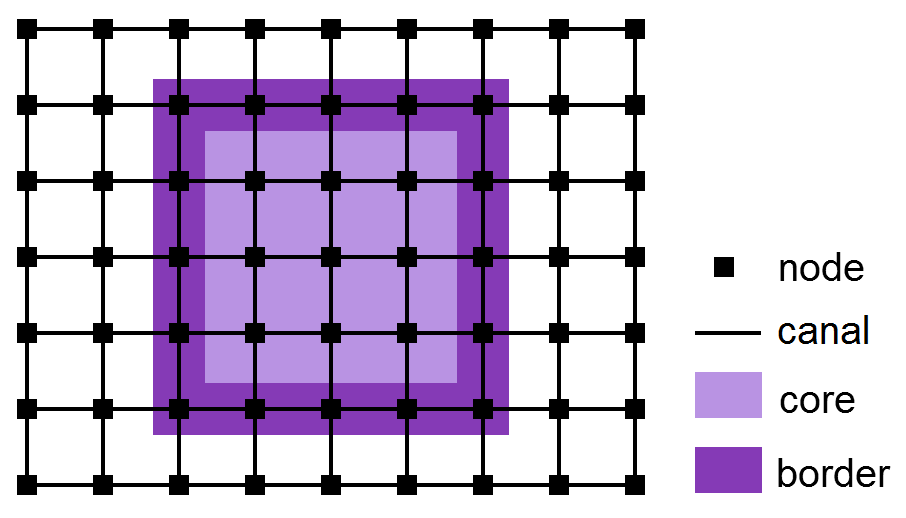}
\caption{Example of control zone} 
\label{fig:ctrzone}
\end{center}
\end{figure}

\end{itemize}
An example of control zone is given in Fig.~\ref{fig:ctrzone}.
The important point is that messages must pass through the \emph{border} to access the \emph{core}.

Here is the main idea of the protocol:
\begin{itemize}
\item When a message enters the \emph{core} of a control zone, an \emph{authorization} is broadcasted on its \emph{border}.
\item When the same message wants to exit the \emph{core}, this \emph{authorization} is required.
\end{itemize}
This mechanism does not disturb the broadcasting of correct messages.

Now, suppose that a Byzantine node is in the core of the control zone, and sends a lying message $(p,m_{1})$, whereas $p$ is \emph{not} in the core of the control zone. Then, this message never gets the authorization to exit the core, as it never entered it. This is illustrated in Fig.~\ref{fig:mainidea}.

\begin{figure}
\begin{center}
\includegraphics[width=8cm]{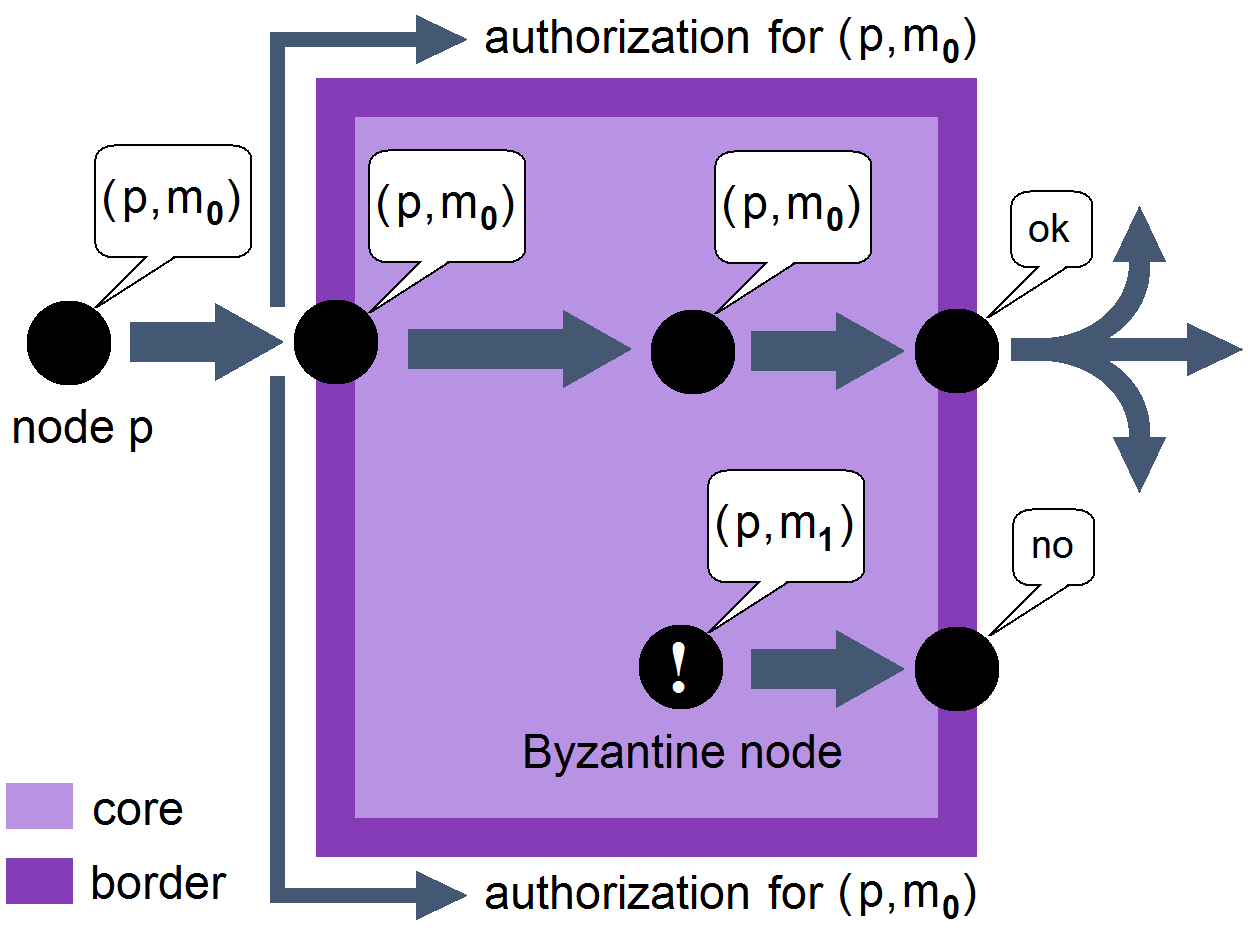}
\caption{Principle of a control zone} 
\label{fig:mainidea}
\end{center}
\end{figure}

Intuitively, this mechanism of control zones enables to limit the broadcasting of Byzantine messages. The underlying idea is to define a lot of control zones on the network, intersecting each other, in order to minimize the broadcasting of Byzantine messages. Then eventually, under certain conditions, we could achieve reliable broadcast in a specific part of the network.

\subsection{Definitions and Hypothesis}

\label{notadef}

\subsubsection{Network Model}

Let $(G,E)$ be a non-oriented graph representing the topology of the network.
\begin{itemize}
\item $G$ denotes the \emph{nodes} of the network.
\item $E$ denotes the \emph{neighborhood} relationships: two given nodes are or are not neighbors. A node can only send messages to its neighbors.
\end{itemize}

Let $Corr$ be the set of \emph{correct} nodes. These nodes follow the protocol described further. We assume that the other nodes are totally unpredictable (\emph{Byzantine}).

We assume asynchronous message passing: any message sent is eventually received, but it can be at any time. We assume that, in an infinite execution, any process is activated inifinitely often. We make no hypothesis on the order of activation of the processes.
We assume local topology knowledge: each node knows the identifiers of its neighbors. Therefore, its direct neighbors cannot lie on their identity when sending a message. Also, a  slightly stronger topology knowledge is required (see definition of $myCtr$ in~\ref{mymyctr}).

\subsubsection{Control zones}

\label{mydefctr}

A set of nodes $S$ is \emph{connected} if, for any nodes $p$ and $q$ in $S$, there exists a chain of neighborhood relationships linking $p$ and $q$. A \emph{node-cut} is a set of nodes $C$ such that $G \backslash C$ is disconnected.

\begin{definition}[Control zone]
\label{defctrz}
A \emph{control zone} is a pair $(Core, Border)$ of disjoint, connected node sets, such that $Border$ is a node-cut isolating $Core$ from the rest of the network.
\end{definition}

We denote the core and the border of a control zone $z$ by $core(z)$ and $border(z)$.

Before running the protocol, we choose an arbitrary set $Ctr$ of control zones. These are the control zones used in the protocol.

\label{defmyctr}

\subsubsection{Messages formalism}
In the protocol, two types of messages can be exchanged:

\begin{itemize}
\item \emph{Standard messages}, of the form $(s,m)$: a message claiming that the node $s$ (\emph{source}) initially sent the information $m$.
\item \emph{Authorization messages}, of the form $(s,m,z)$: a message authorizing the \emph{standard message} $(s,m)$ to exit the control zone $z$.
\end{itemize}

\subsubsection{Attributes of the correct nodes}
\label{mymyctr}

Each correct node $p$ posesses two static attributes:
\begin{itemize}
\item  $m_{0}$: the information that $p$ wants to broadcast.
\item  $myCtr$: set of control zones $z \in Ctr$ such that $p \in border(z)$. We assume that, for each zone $z \in myCtr$, $p$ knows which nodes belong to $core(z)$ and $border(z)$.
\end{itemize}

It also possesses three dynamic attributes:

\begin{itemize}
\item  $Wait$: set of messages received, but waiting for an authorization (initially empty). When $(s,m,q) \in Wait$, it means that $p$ received the \emph{standard message} $(s,m)$ from a neighbor $q$.

\item $Auth$: set of authorizations received (initially empty).
When $(s,m,z) \in Auth$, it means that $p$ has received the authorization for the \emph{standard message} $(s,m)$ on the control zone $z$.
\item $Acc$: set of accepted messages (initially empty).
When $(s,m) \in Acc$, it means that $p$ has received $(s,m)$ and all the corresponding authorizations, and has sent it to its neighbors.
\end{itemize}

The attribute $X$ of a node $p$ is denoted by $p.X$.

\subsection{Local Execution of the Protocol}

\label{localex}
The protocol is locally executed by each correct node.

Let $p$ be the identifier of the current node executing the protocol. $Wait$, $Auth$, $Acc$, $myCtr$ and $m_{0}$ refer to the corresponding attributes of $p$.

The protocol contains the following actions:
\begin{itemize}
\item \emph{Send} a particular message to the neighbors of $p$.
\item \emph{Add} an element $x$ to a set $X$ ($X := X \cup \{x\}$).
\end{itemize}

The protocol is divided in four sections, executed in specific moments: INIT, ENTER, DIFF and EXIT.

\subsubsection{INIT - Initial broadcast}

Executed initially.

\begin{itemize}
\item Send $(p,m_{0})$ to all neighbors.
\item Add $(p,m_{0})$ to $Acc$.
\item $\forall z \in myCtr$, send $(p,m_{0},z)$ to all neighbors.
\end{itemize}

\subsubsection{ENTER - Message entering control zones}

Executed when a standard message $(s,m)$ is received from a neighbor $q$.
\begin{itemize}
\item If $(s,m) \in Acc$, ignore it.
\item Else, add $(s,m,q)$ to $Wait$.
\end{itemize}

\subsubsection{DIFF - Diffusion of authorizations}

Executed when an authorization message $(s,m,z)$ is received from a neighbor $q$.
\begin{itemize}
\item If $(s,m,z) \in Auth$, or $q \notin border(z)$, ignore it.
\item Else: \begin{itemize}
\item Add $(s,m,z)$ to $Auth$.
\item Send $(s,m,z)$ to all neighbors.
\end{itemize}
\end{itemize}

\subsubsection{EXIT - Message exiting control zones}

Executed when an element $(s,m,q)$ of $Wait$ verifies the following condition: $\forall z \in myCtr$, such that $q \in core(z)$ and $s \notin core(z)$, we have $(s,m,z) \in Auth$.
\begin{itemize}
\item Add $(s,m)$ to $Acc$.
\item Send $(s,m)$ to all neighbors.
\item $\forall z \in myCtr$, send $(s,m,z)$ to all neighbors.
\end{itemize}

\section{Properties of the protocol}

\label{secthm}

In this section, we adopt the point of view of an omniscient observer, knowing the topology of the network and the position of all Byzantine nodes. The following theorems enable to determine sets of nodes satisfying certain properties in \emph{any} execution: \emph{safety}, \emph{communication}, \emph{reliability} (see Definitions~\ref{defsafe}, \ref{defcom}, \ref{defrel}).

\begin{itemize}
\item In Theorem~\ref{thsafe}, we show how to determine the \emph{safe} nodes (who never accept a false message).
\item In Theorem~\ref{thcom}, we show how to construct a \emph{communicating} node set (where all correct messages are received).
\item In Theorem~\ref{threl}, we show that a \emph{safe} and \emph{communicating} node set achieves reliable communication.
\end{itemize} 

Notice that it does not require that any correct node knows the position of the Byzantine nodes: this is just a global vision of the network. These theorems are used in Section \ref{secexp}, to evaluate the performances of the protocol for a given placement of Byzantine nodes.

We also analyze the message complexity of the protocol.

\subsection{Notations and Definitions}

We say that a correct node $p$ \emph{accepts} a message $(s,m)$, when $(s,m)$ is added to the set $p.Acc$.

\begin{definition}[Correct and false messages]
A message $(s,m)$ is \emph{correct} if $m = s.m_{0}$. Else, it is \emph{false}.
\end{definition}

\begin{definition}[Correct path]
\label{defpath}
Let $S$ be a node set. Let $p$ and $q$ be two nodes of $S$.
A \emph{correct path} on $S$ between $p$ and $q$ is a serie $(i_{1},\dots,i_{n})$ of correct nodes of $S$ such that:
\begin{itemize}
\item $i_{1} = p$.
\item $i_{n} = q$.
\item $i_{k}$ and $i_{k+1}$ are neighbors.
\end{itemize}
\end{definition}

\begin{definition}[Safe node set]
\label{defsafe}
A node is \emph{safe} if it never accepts a false message. A node set is safe if all its nodes are safe.
\end{definition}

\begin{definition}[Communicating node set]
\label{defcom}
A set $S$ of correct nodes is \emph{communicating} if, for any nodes $p$ and $q$ of $S$, $q$ eventually accepts $(p,p.m_{0})$.
\end{definition}

\begin{definition}[Reliable node set]
\label{defrel}
A set $S$ of correct nodes is \emph{reliable} if, for any nodes $p$ and $q$ of $S$, $q$ never accepts a false message, and eventually accepts $(p,p.m_{0})$.
\end{definition}

\subsection{Determination of a safe node set}

The following theorem enables, under certain conditions, to determine a safe node set in the network. The condition is the existence of a particular set $Z$ of control zones, such that:
\begin{itemize}
\item The union of the cores and the union of the borders are disjoint.
\item The union of the cores contains all Byzantine nodes.
\end{itemize}
Then, all the nodes outside of $Z$ are safe.
Notice that no correct node needs to determine $Z$: we just have to know that this set exists.

\begin{theorem}[Determination of safe nodes]

\label{thsafe}
If there exists a set $Z$ of control zones $z \in Ctr$ such that:

With $Cores = \cup_{z \in Z}$ $core(z)$
\\With $Borders = \cup_{z \in Z}$ $border(z)$
\begin{itemize}
\item (1) The node sets $Cores$ and $Borders$ are disjoint.
\item (2) All Byzantine nodes are in $Cores$.
\end{itemize}

Then any node $v \notin Cores$ is safe.
\end{theorem}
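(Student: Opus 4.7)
The plan is a proof by contradiction keyed on the earliest ``bad event'' in time. I would define an event of type $(X)$ as a correct node $v \notin Cores$ adding a false message $(s,m)$ to $v.Acc$, and an event of type $(Y)$ as a correct node adding an authorization $(s,m,z)$ to its $Auth$ set with $z \in Z$, $s \notin core(z)$, and $(s,m)$ false. The pivotal structural observation is that conditions (1) and (2) jointly force $\bigcup_{z \in Z} border(z) \subseteq Corr$: borders are disjoint from $Cores$ by (1), and $Cores$ contains all Byzantine nodes by (2); hence for each $z \in Z$ every node of $border(z)$ is correct, which is exactly what makes $Z$-labeled authorizations trustworthy. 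Assuming the theorem fails, some $(X)$ event must occur; I would pick the earliest event of either type and derive a contradiction in every case.

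To rule out $(Y)$ as the earliest bad event: by the DIFF rule, the correct node $p$ received the authorization from a neighbor $q \in border(z) \subseteq Corr$. The only ways $q$ could have sent $(s,m,z)$ are INIT (which forces $m = q.m_0 = s.m_0$, contradicting falsity), DIFF (requiring an earlier addition of $(s,m,z)$ to $q.Auth$, hence an earlier $(Y)$, contradicting minimality), or EXIT (requiring $q$ to have previously accepted $(s,m)$; since $q \in Corr \setminus Cores$, this would be an earlier $(X)$, again contradicting minimality).

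To rule out $(X)$ as the earliest bad event: $v$ accepts $(s,m)$ via EXIT on some $(s,m,q) \in v.Wait$, where $q$ is the neighbor from whom $(s,m)$ was earlier received. If $q$ is correct and $q \notin Cores$, then $q$ must have accepted $(s,m)$ earlier (the only way a correct node emits a standard message is INIT or EXIT, and INIT would force $m = s.m_0$), an earlier $(X)$, contradicting minimality. Otherwise $q \in Cores$, so $q \in core(z_0)$ for some $z_0 \in Z$; the node-cut property of $border(z_0)$ applied to the edge $\{v,q\}$ with $v \notin core(z_0)$ forces $v \in border(z_0)$, and hence $z_0 \in v.myCtr$. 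The EXIT condition at $v$ for $z_0$ then demands either $s \in core(z_0)$ or $(s,m,z_0) \in v.Auth$; the latter is a $(Y)$ strictly before the chosen $(X)$, contradicting minimality.

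The step I expect to be the main obstacle is the residual possibility $s \in core(z_0)$: the EXIT rule deliberately exempts a message whose claimed source already lies in the same core as the sender, and nothing locally prevents a Byzantine $q \in core(z_0)$ from forging $(s,m)$ with a claimed source $s \in core(z_0)$ and pushing it across the border. Closing this case cleanly requires reading the falsity definition so that a ``false'' message has a correct source $s$ with a well-defined $s.m_0$, combined with the intended usage of $Z$ in which cores isolate essentially only the Byzantine nodes; then $s \in core(z_0) \subseteq Cores$ together with $s$ correct is incompatible, so $(s,m)$ is not false in this last case, which completes the contradiction.
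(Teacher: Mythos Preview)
Your proof is essentially the paper's, repackaged: the paper takes the first correct $v \notin Cores$ to accept a false $(s,m)$, argues that the sending neighbor $q$ lies in $Cores$, hence $v \in border(z)$ for some $z \in Z$, and then---nested inside---takes the first node of $border(z)$ to emit the authorization $(s,m,z)$ and derives the contradiction from there. You flatten the two nested minimality arguments into a single earliest-event argument over your types $(X)$ and $(Y)$; this is a mild organizational improvement but the content is the same.

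Where you go beyond the paper is in flagging the residual case $s \in core(z_0)$. The paper's proof simply writes ``$z \in v.myCtr$ implies that $(s,m,z) \in v.Auth$'' without ever checking the $s \notin core(z)$ clause in the EXIT guard, so the gap you spotted is real and is present in the published argument as well. Your proposed patch, however, leans on an assumption the theorem does not make: hypothesis~(2) says all Byzantine nodes lie in $Cores$, not that $Cores$ consists only of Byzantine nodes, so a \emph{correct} source $s \in core(z_0)$ is fully consistent with the hypotheses, and your final step (``$s$ correct together with $s \in Cores$ is incompatible'') does not follow from the stated premises. Indeed, with $Ctr = Z = \{z\}$, a Byzantine $b \in core(z)$ sharing the core with a correct $s$ can push a forged $(s,m)$ to a border node $v$, and the EXIT check is vacuous. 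So you have correctly located a genuine issue; closing it formally requires either strengthening the hypotheses (no correct source inside $Cores$) or weakening the conclusion (safety only with respect to sources $s \notin Cores$), rather than the appeal to ``intended usage'' you sketch.
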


\begin{proof}
The proof is by contradiction.
Suppose the opposite: let $(s,m)$ be any false message, that is $m \neq s.m_{0}$.
And let $v$ be the first correct node such that:
\begin{itemize}
\item (a) $v \notin Cores$.
\item (b) $v$ accepts $(s,m)$, that is: $v$ is \emph{not} safe.
\end{itemize}

Obviously, $v$ did not accept $(s,m)$ in INIT, as $m \neq s.m_{0}$. So it was in EXIT.
Thus, there exists $(s,m,q) \in v.Wait$ verifying the condition of EXIT.
And the only way for $(s,m,q)$ to have joined $v.Wait$, is that $v$ received $(s,m)$ from $q$ in ENTER. Then, two possibilities:

\begin{itemize}
\item Either $q$ is a correct node, and accepted $(s,m)$ in EXIT. As $v$ is the first node to verify (a) and (b), it implies that $q \in Cores$.
\item Either $q$ is a Byzantine node. Then, according to (2), $q \in Cores$.
\end{itemize}

So, in any case, $q \in Cores$. Therefore, let $z \in Z$ be a control zone such that $q \in core(z)$.
As $v \notin Cores$, $v \notin core(z)$. But $v$ is neighbor of $q \in core(z)$.
So, by definition of a control zone (see Definition~\ref{defctrz}), $v \in border(z)$: otherwise, the border would not be a node-cut isolating the core. 

Then, by definition of $myCtr$ (see \ref{mymyctr}), $z \in v.myCtr$.
As $(s,m,q)$ verifies the condition of EXIT, $z \in v.myCtr$ implies that $(s,m,z) \in v.Auth$.

The only way for $(s,m,z)$ to have joined $v.Auth$, is that $v$ received $(s,m,z)$ in DIFF, from a neighbor in $border(z)$. Let $u$ be the first node of $border(z)$ to send $(s,m,z)$. As $Cores$ and $Borders$ are disjoint, according to (2), $u$ is correct. And $u$ did not send $(s,m,z)$ in DIFF: otherwise, it would not be the first to do so. So it was in EXIT, implying that $u$ accepted $(s,m)$. So $u$ verified (a) and (b) before $v$. This contradiction achieves the proof.
\end{proof}

\subsection{Construction of a communicating node set}

The following theorem enables to construct a communicating set node by node, when it is possible. Let $S$ be a given communicating set, and $v$ a given correct node: then the theorems tells us if $S \cup \{v\}$ is communicating, and so forth. The conditions are the following:
\begin{itemize}
\item The node $v$ has a neighbor in $S$.
\item We have enough \emph{correct paths} (see Definition~\ref{defpath}) to \emph{always} receive the
authorizations required for the communication between $v$ and $S$.
\end{itemize}
To initiate the construction of $S$: simply notice that any correct node $p$ forms a communicating node set $\{p\}$.

\begin{theorem}[Construction of a communicating node set]
\label{thcom}
Let $S$ be a communicating node set.
Let $v$ be a correct node verifying the following conditions:
\begin{itemize}
\item (1) v has a neighbor $u \in S$
\item (2) Let $Z$ be the set of control zones $z \in Ctr$, such that $u \in core(z)$
and $v \in border(z)$. Then $\forall z \in Z$, there exists a correct path on $border(z)$ between $v$ and a node $w \in S$.
\end{itemize}
Then $S \cup \{v\}$ is also communicating.
\end{theorem}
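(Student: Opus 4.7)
The plan is to verify the definition of a communicating node set for $S \cup \{v\}$ by treating the pairs $(p,q) \in (S \cup \{v\})^{2}$ case by case. Pairs with $p,q \in S$ are already given by hypothesis, and the pair $p = q = v$ is immediate from INIT. Only two nontrivial kinds remain: $(p \in S, v)$, for which I must show that $v$ eventually accepts every $(p, p.m_{0})$; and $(v, q \in S)$, for which I must show that every $q \in S$ eventually accepts $(v, v.m_{0})$.

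For the first kind I would trace the protocol as follows. Since $S$ is communicating, $u$ eventually accepts $(p, p.m_{0})$ and retransmits it to all neighbors, so $v$ places $(p, p.m_{0}, u)$ in $Wait$ via ENTER. The EXIT guard then demands, for every $z \in v.myCtr$ with $u \in core(z)$ and $p \notin core(z)$, that $(p, p.m_{0}, z) \in v.Auth$. Each such $z$ belongs to the set $Z$ of condition~(2), which supplies a correct path on $border(z)$ from $v$ to some $w \in S$. Since $w \in S$, it accepts $(p, p.m_{0})$; since $w \in border(z)$, $z \in w.myCtr$ and so $w$ broadcasts $(p, p.m_{0}, z)$. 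Every intermediate node of that path is correct and lies in $border(z)$, so DIFF relays the authorization hop by hop until it arrives at $v$. The EXIT guard is then fulfilled and $v$ accepts.

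For the second kind I start with a useful geometric observation: for every control zone $z$ with $u \in core(z)$ and $v \notin core(z)$, the node $v$ must in fact lie in $border(z)$, because $border(z)$ is a node-cut isolating $core(z)$ and $v$ is a neighbor of $u \in core(z)$. Hence any such $z$ belongs to $Z$ and condition~(2) applies to it. Now, $v$'s INIT sends $(v, v.m_{0})$ to $u$ and broadcasts $(v, v.m_{0}, z)$ for every $z \in v.myCtr$, in particular for every $z \in Z$. The EXIT guard at $u$ for $(v, v.m_{0}, v)$ is vacuously true, so $u$ accepts $(v, v.m_{0})$ at once and begins propagating it together with its own border authorizations. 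From here I would mimic the acceptance cascade of $(u, u.m_{0})$ through $S$ guaranteed by the communicating hypothesis, checking that every new EXIT guard triggered by the change of source from $u$ to $v$ is unlocked by an authorization $(v, v.m_{0}, z)$ with $z \in Z$, which $v$ itself broadcasts in INIT and which is carried on $border(z)$ into $S$ along the correct path of~(2).

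The first kind is a clean trace of the protocol rules. The main obstacle is the second kind: one must verify that the authorizations injected by $v$ propagate on the border of each $z \in Z$ widely enough to cover every point where a new EXIT guard fires inside $S$. The decisive ingredient is the geometric observation above, which guarantees that every problematic zone is in $Z$; after that, the correct paths of~(2) do double duty, conveying authorizations from $S$ to $v$ in the first kind and from $v$ into $S$ in the second.
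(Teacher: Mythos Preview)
Your proposal is correct and follows essentially the same route as the paper: the same two-case split (the paper packages these as Lemmas~1 and~2), the same protocol trace for the first case, and for the second case the same preliminary facts (your geometric observation and the vacuous EXIT at $u$) followed by the same ``mimic the $(u,u.m_0)$ cascade'' / indistinguishability argument. The one step the paper makes more explicit is its point~(b): it argues that the authorization $(v,v.m_0,z)$ reaches not just the endpoint $w$ of the path from~(2) but \emph{every} node of $S\cap border(z)$, which is precisely the obstacle you flag in your last paragraph but do not fully spell out.
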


\begin{proof}
We use Lemma~\ref{lem1} and Lemma~\ref{lem2}, detailed below. According to these lemmas:
\begin{itemize}
\item $\forall x \in S$, $v$ eventually accepts $(x,x.m_{0})$ (Lemma~\ref{lem1})
\item $\forall x \in S$, $x$ eventually accepts $(v,v.m_{0})$ (Lemma~\ref{lem2})
\end{itemize}
Then, according to Definition~\ref{defcom}, $S \cup \{v\}$ is communicating.

\end{proof}

\begin{lemma} \label{lem1} Let there be the same hypothesis as Theorem~\ref{thcom}.
Then $\forall x \in S$, $v$ eventually accepts $(x,x.m_{0})$.
\end{lemma}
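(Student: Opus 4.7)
The plan is to examine when the EXIT rule fires at $v$ for the waiting entry $(x,x.m_{0},u)$, and to verify that both ingredients of its trigger are eventually produced: (i) the triple $(x,x.m_{0},u)$ belongs to $v.Wait$ (or else $v$ has already accepted $(x,x.m_{0})$ and we are done), and (ii) for every $z \in v.myCtr$ with $u \in core(z)$ and $x \notin core(z)$, the authorization $(x,x.m_{0},z)$ lies in $v.Auth$.

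For (i), I would observe that, since $x,u \in S$ and $S$ is communicating, $u$ eventually accepts $(x,x.m_{0})$. The triggering action --- INIT if $x=u$, EXIT otherwise --- makes $u$ send $(x,x.m_{0})$ to all its neighbors, in particular to $v$. By asynchronous reliable delivery together with the fair-activation assumption, $v$ eventually processes this reception through ENTER; either $(x,x.m_{0}) \in v.Acc$ already, or $(x,x.m_{0},u)$ is added to $v.Wait$.

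For (ii), fix such a $z$. The conditions $u \in core(z)$ and $v \in border(z)$ (i.e.\ $z \in v.myCtr$) place $z$ in the set $Z$ described by hypothesis (2) of Theorem~\ref{thcom}; hence there is a correct path $(v = i_{1}, i_{2}, \dots, i_{n} = w)$ lying entirely on $border(z)$ with $w \in S$. Since $w \in S$, it eventually accepts $(x,x.m_{0})$, and the corresponding INIT/EXIT action makes $w$ send $(x,x.m_{0},z')$ to all neighbors for every $z' \in w.myCtr$. Because $w \in border(z)$ we have $z \in w.myCtr$, so $w$ in particular emits $(x,x.m_{0},z)$ toward $i_{n-1}$. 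I would then induct on $k$ downward from $n-1$ to $1$: the sender $i_{k+1}$ lies in $border(z)$, so the ``$q \notin border(z)$'' guard of DIFF at $i_{k}$ does not discard the message; thus $i_{k}$ either already has the authorization, or records it in $Auth$ and retransmits to all its neighbors, reaching $i_{k-1}$ in particular. At $k=1$ this delivers $(x,x.m_{0},z)$ to $v.Auth$.

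Once $(x,x.m_{0},u) \in v.Wait$ and the finitely many required authorizations are present in $v.Auth$, the guard of EXIT on this entry is permanently satisfied, so by fairness $v$ is activated and accepts $(x,x.m_{0})$. The main obstacle I anticipate is not any algebraic step but controlling the arbitrary interleaving of receptions and activations: the standard message from $u$ and the authorizations travelling along the various border paths may arrive in any order, and EXIT is only re-evaluated when $v$ is scheduled. I would handle this uniformly by invoking the fair-activation hypothesis of Section~\ref{notadef}, which guarantees that, whatever the interleaving, $v$ is eventually activated after the latest of these receptions and therefore fires EXIT.
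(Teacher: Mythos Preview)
Your argument is correct and follows essentially the same route as the paper's own proof: use communicating $S$ to get $u$ to accept and forward $(x,x.m_0)$ to $v$, then for each relevant zone $z\in Z$ use the correct border path from $w\in S$ (which accepts $(x,x.m_0)$ and emits the authorization) back to $v$ via DIFF. Your version is in fact slightly more careful than the paper's, since you explicitly handle the $x=u$ and $x=w$ cases through INIT rather than EXIT and spell out the induction along the border path and the fairness argument for interleavings.
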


\begin{proof} Let $x$ be a node of $S$.

As $S$ is communicating, $u$ eventually accepts $(x,x.m_{0})$ in EXIT, and sends it to its neighbors.
So according to (1), $v$ receives $(x,x.m_{0})$ from $u$.
To accept it, according to the condition of EXIT, $v$ only needs the authorizations $(x,x.m_{0},z)$ with $z \in Z$ (possibly less, if $x \in core(z)$). Let us show that $v$ eventually receives these authorizations.

Let $z \in Z$ be. Then, according to (2), there exists a correct path on its border between $v$ and a node $w \in S$.
As $S$ is communicating, $w$ eventually accepts $(x,x.m_{0})$. So, according to EXIT, $w$ sends $(x,x.m_{0},z)$ to its neighbors. So does each node of the correct path in DIFF, until $(x,x.m_{0},z)$ reaches $v$. Thus, the result.
\end{proof}

\begin{lemma} \label{lem2} Let there be the same hypothesis as Theorem~\ref{thcom}.
Then $\forall x \in S$, $x$ eventually accepts $(v,v.m_{0})$.
\end{lemma}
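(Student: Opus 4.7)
The plan is to mirror the argument of Lemma~\ref{lem1} with the roles of source and destination interchanged. Fixing $x \in S$, I split the proof into the single hop $v \to u$ and then the propagation of $(v, v.m_{0})$ from $u$ through the communicating set $S$ to $x$.

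For the hop $v \to u$, I observe that during INIT the node $v$ sends $(v, v.m_{0})$ to every neighbor (in particular to $u$) and also $(v, v.m_{0}, z)$ for each $z \in v.myCtr$. When $u$ receives $(v, v.m_{0})$ from $v$ in ENTER, it inserts $(v, v.m_{0}, v)$ into $u.Wait$. The EXIT guard requires, for every $z \in u.myCtr$ with $v \in core(z)$ and $v \notin core(z)$, that $(v, v.m_{0}, z) \in u.Auth$; since the two conjuncts are contradictory, the guard is vacuously satisfied. Hence $u$ will accept $(v, v.m_{0})$ and, by EXIT, re-emit $(v, v.m_{0})$ together with each $(v, v.m_{0}, z')$ for $z' \in u.myCtr$ to its neighbors.

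For the propagation from $u$ to $x$, the key observation I plan to exploit is that, along any hop $a \to b$ of a correct route, the EXIT guard for $(v, v.m_{0})$ differs from the guard for $(u, u.m_{0})$ only at zones $z$ with $a \in core(z)$, $u \in core(z)$, and $v \notin core(z)$. For every such $z$, since $v$ is a neighbor of $u \in core(z)$ and $v \notin core(z)$, Definition~\ref{defctrz} forces $v \in border(z)$, so $z$ belongs to the set $Z$ of hypothesis~(2) and $z \in v.myCtr$. Consequently, $v$'s INIT broadcast of $(v, v.m_{0}, z)$ and the correct path on $border(z)$ from $v$ to some $w \in S$ provided by (2) diffuse the authorization along $border(z)$ through repeated DIFF actions, exactly as in the proof of Lemma~\ref{lem1}.

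The main difficulty I anticipate lies in this second step: hypothesis~(2) only guarantees that the authorization reaches some $w \in S$, whereas, to unlock EXIT for every receiver $b$ along the route used to carry $(v, v.m_{0})$ from $u$ to $x$, I must ensure the authorization reaches that specific $b$ as well. I expect to close the gap by chaining the diffusions: once a border node accepts $(v, v.m_{0})$, it re-emits $(v, v.m_{0}, z')$ for every $z' \in myCtr$ via EXIT, and combining this with the correct-path infrastructure already implicit in $S$ being communicating (applied analogously to Lemma~\ref{lem1}) should carry the authorization to each needed $b$, yielding the acceptance of $(v, v.m_{0})$ by $x$ and concluding the proof.
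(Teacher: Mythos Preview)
Your first two steps match the paper: the hop $v\to u$ via the vacuous EXIT guard, and the observation that the only zones where the EXIT guard for $(v,v.m_{0})$ is stricter than for $(u,u.m_{0})$ are precisely the zones in $Z$. The gap is in your final step. Your proposed ``chaining'' mechanism is circular: you want a border node that has \emph{accepted} $(v,v.m_{0})$ to re-emit the authorization in EXIT, but acceptance is exactly what you are trying to establish, and hypothesis~(2) only delivers the authorization to some $w\in S$, not to the particular receiver $b$ on your route. Appealing vaguely to ``the correct-path infrastructure implicit in $S$ being communicating'' does not help, because \emph{communicating} is a statement about eventual acceptance of messages, not about the existence of correct paths on specific borders.

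The paper closes this gap differently. It first proves, as a standalone fact~(b), that for every $z\in Z$ and every $y\in S\cap border(z)$, the authorization $(v,v.m_{0},z)$ reaches $y$ purely through DIFF along $border(z)$, starting from $v$'s INIT broadcast---no prior acceptance of $(v,v.m_{0})$ is needed anywhere. Once (a) and (b) hold, the paper invokes an \emph{indistinguishability} argument: from that configuration on, every correct node in $S$ treats $(v,v.m_{0})$ exactly as it treats $(u,u.m_{0})$, because the only places the EXIT guards could differ are border nodes of zones in $Z$, and by~(b) all such nodes in $S$ already hold the extra authorization. Since $S$ is communicating, $x$ eventually accepts $(u,u.m_{0})$; by indistinguishability, $x$ eventually accepts $(v,v.m_{0})$. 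The piece you are missing is this decoupling: establish the authorizations first via DIFF only, then transfer the known propagation of $(u,u.m_{0})$ wholesale to $(v,v.m_{0})$, rather than trying to bootstrap acceptance and authorization emission against each other.
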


\begin{proof} Let $x$ be a node of $S$.

First, let us establish two preliminary results:
\begin{itemize}
\item (a) Initially, $v$ sends $(v,v.m_{0})$. So $u$ receives $(v,v.m_{0})$ from $v$. According to ENTER, $(v,v.m_{0},v)$ is added to $u.Wait$. As there is no zone $z$ such that $v \in core(z)$ and $v \notin core(z)$, no authorization is required in EXIT. Thus, $u$ eventually accepts $(v,v.m_{0})$.

\item (b) Let $z \in Z$ be. Let $y$ be any node of $S$ in $border(z)$.
\begin{itemize}
\item As $z \in v.myCtr$, $v$ sent $(v,v.m_{0},z)$ in INIT.
\item According to (2), there exists a correct path on $border(z)$ between $v$ and $w$. 
\end{itemize}
So $w$ eventually receives the authorization $(v,v.m_{0},z)$. $S$ is a set of correct nodes, $w \in S$ and $border(z)$ is connected. Thus, there also exists a correct path on $border(z)$ between $w$ and $y$. Then $y$ eventually receives $(v,v.m_{0},z)$.
\end{itemize}

Now, let $\mathcal{C}_{1}$ be a configuration in which we have reached the states described in (a) and (b). Then, in such a configuration, $(v,v.m_{0})$ becomes \emph{indistinguishable} from $(u,u.m_{0})$. Indeed, the only part of the protocol that could distinguish these messages is the condition of EXIT, for the
nodes of $border(z)$ with $z \in Z$. But, as we have reached
the state described in (b), all these nodes have received the
authorizations $(v,v.m_{0},z)$, $z \in Z$. So EXIT behaves the
same way in both cases. Thus, as $x$ eventually accepts $(u,u.m_{0})$, $x$ eventually accepts $(v,v.m_{0})$.

This notion of \emph{indistinguishability} is deliberately intuitive: the detailed proof is by exhaustion, and presents no particular interest. However, let us give the sketch of this proof.

As $x$ eventually accepts $(u,u.m_{0})$, let
$(\mathcal{A}_{1}(u), \dots , \mathcal{A}_{n}(u))$ be the list of actions related to $(u,u.m_{0})$, by order of execution. These actions can be of the following types:
\begin{itemize}
\item $p$ sends $(u,u.m_{0})$ to $q$
\item $p$ sends $(u,u.m_{0},z)$ to $q$, $z \in Ctr$
\item $p$ receives $(u,u.m_{0})$ from $q$
\item $p$ receives $(u,u.m_{0},z)$ from q, $z \in Ctr$
\item $p$ accepts $(u,u.m_{0})$ from $q$
\end{itemize}

Let there be any configuration occuring after $\mathcal{C}_{1}$. Let $k < n$ be such that:
\begin{itemize}
\item The actions $(\mathcal{A}_{1}(v),\dots,\mathcal{A}_{k}(v))$ have been executed.
\item $\mathcal{A}_{k+1}(v)$ has not been executed yet.
\end{itemize}

Then, as $\mathcal{A}_{k+1}(u)$ eventually occurs, $\mathcal{A}_{k+1}(v)$ eventually occurs by the same mechanism. Notice that $\mathcal{A}_{k+1}(v)$ can also occur by another mechanism: this one is by default. Thus, by recursion, the result.\end{proof}

\subsection{Determination of a reliable node set}

This theorem is the combination of the two previous theorems: we simply show that a \emph{safe} and \emph{communicating} node set is \emph{reliable}. In pratice, to determine a reliable node set, we just have to:
\begin{itemize}
\item Determine a safe node set $S_{1}$, if we manage to.
\item Construct a communicating node set $S_{2}$.
\item Make the intersection of $S_{1}$ and $S_{2}$.
\end{itemize}

\begin{theorem}[Determination of a reliable node set]
\label{threl}
Let $S_{1}$ be a safe node set.
Let $S_{2}$ be a communicating node set.
Then $S = S_{1} \cap S_{2}$ is a reliable node set.
\end{theorem}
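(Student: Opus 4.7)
The plan is to unpack the three definitions and observe that the two clauses composing reliability are delivered, one by each hypothesis. Concretely, I would take an arbitrary pair $p, q \in S = S_1 \cap S_2$ and verify the two conditions of Definition~\ref{defrel} for them.

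For the safety clause: since $q \in S_1$ and $S_1$ is safe, Definition~\ref{defsafe} gives directly that $q$ never accepts a false message. Note that safety is a property of a single node, so membership of $q$ in $S_1$ alone suffices; the status of $p$ is irrelevant here.

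For the liveness clause: since both $p$ and $q$ lie in $S_2$ and $S_2$ is communicating, Definition~\ref{defcom} yields that $q$ eventually accepts $(p, p.m_0)$. Here it is essential that the intersection is taken with a communicating set, as communication is a relational property requiring \emph{both} endpoints to be inside the set.

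Combining the two bullets, every pair $p,q \in S$ satisfies the requirements of Definition~\ref{defrel}, hence $S$ is reliable. There is no real obstacle: the theorem is a tautological consequence of the fact that the two prior theorems have been phrased in terms of exactly the two clauses that constitute reliability. The only care needed is the distinction between the unary property (safety) and the binary property (communication), which is automatically handled by picking both nodes inside the intersection.
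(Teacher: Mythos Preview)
Your proposal is correct and mirrors the paper's own proof almost verbatim: pick $p,q\in S$, use $q\in S_1$ for the safety clause and $p,q\in S_2$ for the communication clause, then invoke Definition~\ref{defrel}. The extra remarks you add about safety being unary and communication being binary are accurate and do not alter the argument.
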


\begin{proof}
Let $p$ and $q$ be two nodes of $S$.
\begin{itemize}
\item As $q$ also belongs to $S_{1}$, $q$ never accepts a false message.
\item As $p$ and $q$ also belong to $S_{2}$, $q$ eventually accepts $(p,p.m_{0})$.
\end{itemize}
Then, by definition \ref{defrel}, $S$ is reliable.
\end{proof}

\subsection{Message complexity}

\label{mcgen}
Let us evaluate the message complexity of this protocol when the whole network is a reliable node set (according to Definition~\ref{defrel}).

We define the following parameters:
\begin{itemize}
\item $n$, the number of nodes of the network.
\item $d$, the \emph{degree} of the network, that is: the maximal number of neighbors for a node.
\item $N_{Ctr}$, the number of control zones in $Ctr$.
\item $N_{Border}$, the maximal number of nodes in the border of a control zone.
\end{itemize}

Let $u$ be any node. Let us evaluate the number of messages related to $u$:
\begin{itemize}
\item All nodes once accept and send $(u,u.m_{0})$ to their neighbors, which makes at most $dn$ messages.
\item All nodes on the border of a control zone $z$ once send $(u,u.m_{0},z)$ to their neighbors, which makes at most $dN_{Border}N_{Ctr}$ messages.
\end{itemize}

Thus, at most $dn(n + N_{Border}N_{Ctr})$ messages are sent in the network. Therefore, if we assume that $N_{Border}$ is $o(1)$ and that $N_{Ctr}$ is $o(n)$, the message complexity is $o(n^{2})$, the same as a standard broadcast protocol (see~\ref{classical}).

\section{Evaluation of the Protocol}

\label{secexp}
In this section, we make an experimental evaluation of our protocol. We describe our methodology and our case of study, then comment on the results.

\subsection{Methodology}

We want to evaluate the performances of our protocol for a given network topology and a given choice of control zones. As we are most interested in obtaining a high \emph{proportion} of correct nodes which communicate reliably, assuming a probabilistic distribution of the Byzantine nodes is a sensible option. It is also motivated by real constraints of actual networks that, we believe, could be supporting our protocol: for instance, most peer-to-peer overlay networks give newcomers a random identifier and thus a random position in the virtual topology for the purpose of load balancing, and many virus propagation mechanisms use random epidemic schemes to spread across networks. The main metric for evaluating the efficiency of a reliable communication scheme is the probability for two correct nodes, selected uniformly at random, to communicate properly.

\label{pnbyz}

Our evaluation scheme has the following input and output:
\begin{itemize}
\item \emph{Input}: $n_{B}$, the number of Byzantine nodes on the network, randomly distributed.
\item \emph{Output}: $P(n_{B})$, the probability that two randomly choosen nodes communicate reliably. That is: each node eventually receives the message of the other, and never accepts any lying message.
\end{itemize}

To evaluate $P(n_{B})$, we use simulations and a Monte Carlo method \cite{MU05b}. For a given value of $n_{B}$, we run a large number of simulations. The fraction of succesful simulations converges to $P(n_{B})$. It is actually impossible to simulate a distributed algorithm in the presence of Byzantine failures. Indeed, it would imply to predict the worst possible behavior of Byzantine nodes, which is a far too difficult problem. Also, giving a specific behavior to faulty nodes would weaken the problem tremendously. Therefore, instead of simulating the protocol, we use the theorems of Section~\ref{secthm}.

Here are the main steps of a single simulation:
\begin{itemize}
\item Among the nodes of the network, choose $n_{B}$ nodes that are Byzantine, uniformly at random. That is, all possible distributions of Byzantine nodes have the same probability to occur.
\item Use Theorem~\ref{threl} to construct a \emph{reliable node set} (see Definition~\ref{defrel}). In the worst case, this set is empty. An illustration of this step is given in \ref{toyex}.
\item Choose two nodes uniformly at random. If both nodes are in the reliable node set, the simulation is a success. Else, it is a failure.
\end{itemize}

Notice that we only construct \emph{one} reliable node set, which may not necessarily be the best one. Therefore, the estimated value of $P(n_{B})$ is a lower bound of the real value of $P(n_{B})$. This is not a problem, as we only want to give guarantees.

\subsection{Topology and Control Zones}

In this subsection, we choose an particular network topology and a particular set of control zones to perform our evaluation. Then we present an example of the construction of a reliable node set.

\subsubsection{Network Topology}

We consider a \emph{torus} network and a \emph{grid} network. A torus network can be seen as a grid network with a continuity between the left-right and up-down extremities.
Our topology choice is motivated by the fact that those are the most simple yet two-dimensional topologies with fixed degree. 

\begin{definition}[Torus and grid]
A $N \times N$ \emph{torus} (resp. \emph{grid}) network is a network such that:
\begin{itemize}
\item Each node has a unique identifier $(i,j)$ with
$1 \leq i \leq N$ and $1 \leq j \leq N$.
\item Two nodes $(i_{1},j_{1})$ and $(i_{2},j_{2})$ are neighbors if and only if one of these two conditions is satisfied:
\begin{itemize}
\item $i_{1} = i_{2}$ and $ \lvert j_{1}-j_{2} \rvert  = 1$ or $N$ (resp. $1$).
\item $j_{1} = j_{2}$ and $ \lvert i_{1}-i_{2} \rvert = 1$ or $N$ (resp. $1$).
\end{itemize}
\end{itemize}
\end{definition}

\begin{figure*}
\begin{center}
\includegraphics[width=12cm]{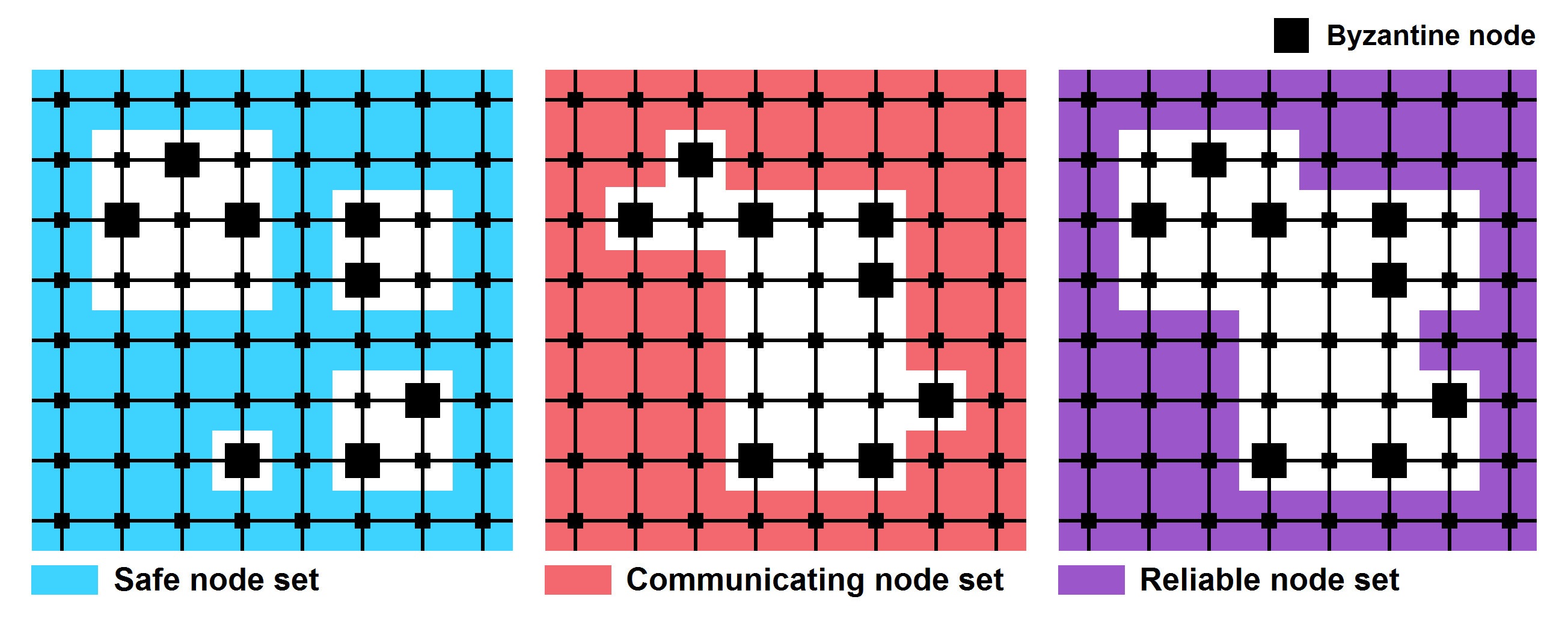}
\caption{Toy example of safe, communicating and reliable node sets} 
\label{fig:toyex}
\end{center}
\end{figure*}

\subsubsection{Control Zones}

Our protocol is given a set of \emph{square control zones}. See \ref{mydefctr} for the definition of a control zone.

\begin{definition}[Square control zone]
A \emph{square control zone} $Sqr(i_{0},j_{0},w)$ is a pair $(Core, Border)$ such that:

\begin{itemize}
\item $Core$ if the set of nodes $(i,j)$ such that:
\begin{itemize}
\item $i_{0} < i < i_{0} + w + 1$
\item $j_{0} < j < j_{0} + w + 1$
\end{itemize}
\item $Border$ is the set of nodes $(i,j)$ such that:
\begin{itemize}
\item  $i = i_{0}$ or $i_{0} + w + 1$,
and $j_{0} \leq j \leq j_{0} + w + 1$
\item  $j = j_{0}$ or $j_{0} + w + 1$,
and $i_{0} \leq i \leq j_{0} + w + 1$
\end{itemize}
\end{itemize}
The parameter $w$ denotes the \emph{width} of the control zone.

\end{definition}

For instance, Figure~\ref{fig:ctrzone} represents a square control zone of width $3$;
$(i_{0},j_{0})$ corresponds to the upper-left node of the border of the control zone.

The set of control zones $Ctr$ used by the protocol (see~\ref{mydefctr}) has parameter $W$, the \emph{order} of the protocol.

\begin{definition}[Order of the protocol]
\label{deforder}
A protocol of \emph{order} $W$ on a torus network is defined by the following set $Ctr$ of control zones:
\begin{itemize}

\item $Ctr(W) = \bigcup_{1 \leq w \leq W, 1 \leq i_{0} \leq N, 1 \leq j_{0} \leq N}$ $Sqr(i_{0},j_{0},w)$
\end{itemize}
\end{definition}

For instance, $Ctr(3)$ is the set of all square control zones of width $1$, $2$ or $3$. Notice that a node knowing its identifier $(i,j)$ and the order $W$ of the protocol can easily determine its set $myCtr$ (see \ref{mymyctr}), without any offline implementation.

\subsubsection{Message complexity}

Let us evaluate the exact number of messages sent, when the whole network is a reliable node set.

Let $n$ be the number of nodes of the network, and $W$ the order of the protocol (according to Definition~\ref{deforder}). A square control zone of width $w$ has $4(w+1)$ nodes on its border. According to Definition~\ref{deforder}, there are $n$ square control zones of a given width $w$.
Let $u$ be any node. Let us evaluate the number of messages related to $u$. First, all nodes once accept and send $(u,u.m_{0})$ to their $4$ neighbors, which makes $4n$ messages. Second, all nodes on the border of a control zone $z$ once send $(u,u.m_{0},z)$ to their $4$ neighbors, which makes $4n \sum_{w=1}^{W}4(1+w) = 8nW(W+3)$ messages.
Thus, $4n^{2}$ standard messages are sent, the same as a standard broadcast protocol (see~\ref{classical}). In addition, $8W(W+3)n^{2}$ authorization messages are sent. However, in practice, they can contain a hash code of the authorized message: they are potentially way lighter.

\subsubsection{Example of construction of a reliable node set}

\label{toyex}
The main step of a simulation is the construction of a \emph{reliable node set}, if it exists. Figure~\ref{fig:toyex} represent a toy example (extracted from a supposedly larger network), for a protocol of order $3$. Let us comment on this figure.

First, we determine a \emph{safe node set} (see Definition~\ref{defsafe}), using Theorem~\ref{thsafe}. There actually exists a set $Z$ of square control zones satisfying the conditions of Theorem~\ref{thsafe}. The blank squares correspond to the cores of the control zones of $Z$. These cores contain all Byzantine nodes, and do not intersect the union of the corresponding borders. Here, having control zones of width $3$ is an advantage: with width $2$, the upper-left group of Byzantine nodes could not be neutralized.

Then, we construct a \emph{communicating node set} $S$ (see Definition~\ref{defcom}), using Theorem~\ref{thcom}. We notice that the correct nodes surrounded by too many Byzantine nodes cannot be added to $S$, as they do not satisfy the conditions of Theorem~\ref{thcom}. Here, having control zones of width $3$ is a drawback: the presence of these zones make the conditions of Theorem~\ref{thcom} harder to satisfy, which limits the size of the communicating node set.

Finally, we determine the \emph{reliable node set} (see Definition~\ref{defrel}), using Theorem~\ref{threl}. According to this theorem, we simply take the intersection of the reliable and communicating node sets, determined previously.

\subsection{Experimental results}

In Figure~\ref{fig:influ}, we represented the influence of the number of Byzantine nodes and of the order of the protocol.
The left column gives the results for a $100 \times 100$ torus network.
The right column gives the results for the corresponding grid network, obtained by an edge-cut on the torus.
This edge-cut separates some control zones in $2$ or $4$ new zones. We consider that these new control zones have independent identifiers.

\begin{figure*}
\begin{center}
\includegraphics[width=18cm]{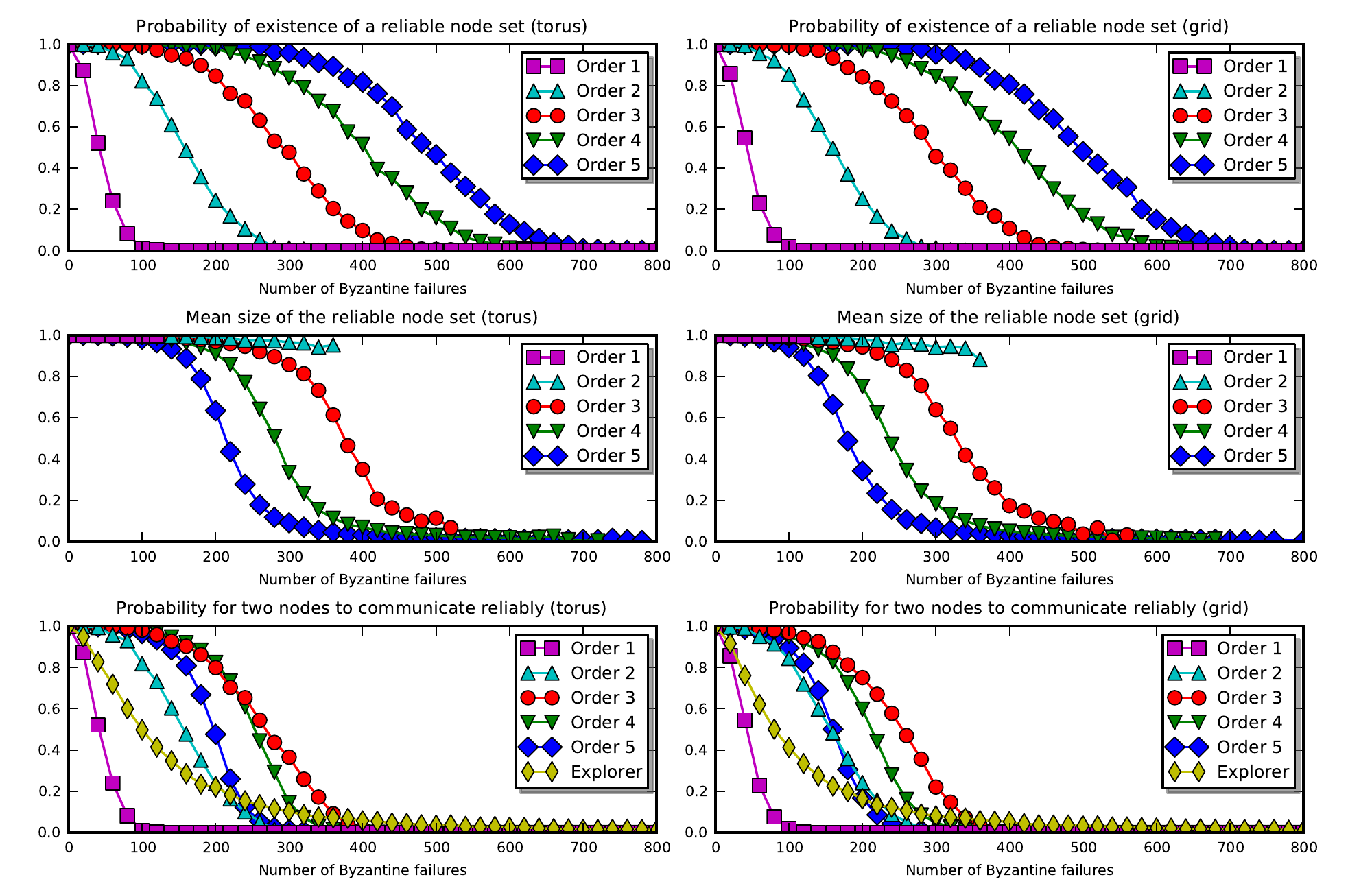}
\caption{Influence of the parameters of simulation} 
\label{fig:influ}
\end{center}
\end{figure*}

The first row of Figure~\ref{fig:influ} represents the probability that a reliable node set \emph{exists}. As any correct node forms a communicating node set, this corresponds to the probability of existence of a safe node set. That is, the probability that a set $Z$ of control zones satisfies the conditions of Theorem~\ref{thsafe}.
Figure~\ref{fig:influ} shows that this probability \emph{increases} with the order.
To give an illustration: a group of $3 \times 3$ Byzantine nodes cannot be neutralized at order $2$. But with the new zones added at order $3$, it becomes possible. And so forth. 

The second row of Figure~\ref{fig:influ} represents the mean size of the reliable node set, when it exists. That is, the fraction of correct nodes covered by the reliable node set.
Figure~\ref{fig:influ} shows that this fraction \emph{decreases} with the order. Indeed, increasing the order makes the conditions of Theorem~\ref{thcom} harder to satisfy, as the set $Z$ of Theorem~\ref{thcom} contains more zones. Therefore, the construction of the communicating node set is made more difficult.

The third row of Figure~\ref{fig:influ} represents the probability  $P(n_{B})$ for two correct nodes, choosen uniformly at random, to communicate reliably. In the previous plots, we showed that the order of the protol had a \emph{positive} influence on the existence of a reliable node set, but a \emph{negative} influence on its size. Therefore, an compromise between these two tendancies appears for order $3$, for which the probability is optimal.

It is difficult to compare our proposal with the high-connectivity based approaches ~\cite{D82j,NT09j} since they always assume the worst possible placement of Byzantine nodes. For example, Explorer~\cite{NT09j} has the sender use all paths and the receiver validates the first message coming from $2k+1$ node-disjoint paths by a majority. So an asynchronous schedule in a torus may always choose that node-disjoint paths going through Byzantine nodes arrive first, defeating the protocol anytime the number of Byzantine is greater than 1. In order to maximize the efficiency of Explorer in our context (\emph{i.e.} assuming random placement of the nodes), we force the sender to use exactly $4$ node-disjoint fixed paths. Then, for a particular sender and receiver, the Explorer protocol works if and only if Byzantine nodes are located on at most one of those four paths. We present the corresponding guarantees of the modified Explorer protocol in Figure~\ref{fig:influ}.
It turns out that our protocol outperforms the modified Explorer by a significant margin. For example, if the goal probability is $P(n_{B}) \geq 0.99$, then on the grid (resp. torus) topology, the modified version of Explorer can tolerate at most $5$ (resp. $7$) Byzantine nodes. Our approach can tolerate up to $50$ (resp. $80$) Byzantine nodes.

\section{Conclusion}

In this paper, we showed that, if we accept that a small minority of correct nodes does not communicate reliably in the presence of Byzantine failures, we make it possible to tolerate a large number of those failures, even in a low-connectivity network. We proposed an experimental methodology to obtain probabilistic guarantees, and illustrated this on torus and grid shaped networks, with an uniform distribution of Byzantine failures. Yet, the same principle could be applied to any network and any distribution of Byzantine failures. However, it requires to define a proper set of control zones to limit the Byzantine influence. Defining optimal sets of control zones for a given communication graph and Byzantine node distribution is a challenging open question.

\bibliographystyle{plain}
\bibliography{biblio}

\begin{thebibliography}{10}

\bibitem{AW98b}
H.~Attiya and J.~Welch.
\newblock {\em Distributed Computing: Fundamentals, Simulations, and Advanced
  Topics}.
\newblock McGraw-Hill Publishing Company, New York, May 1998.
\newblock 6.

\bibitem{BV05c}
Vartika Bhandari and Nitin~H. Vaidya.
\newblock On reliable broadcast in a radio network.
\newblock In Marcos~Kawazoe Aguilera and James Aspnes, editors, {\em PODC},
  pages 138--147. ACM, 2005.

\bibitem{CL99c}
Miguel Castro and Barbara Liskov.
\newblock Practical byzantine fault tolerance.
\newblock In {\em OSDI}, pages 173--186, 1999.

\bibitem{D82j}
D.~Dolev.
\newblock The {Byzantine} generals strike again.
\newblock {\em Journal of Algorithms}, 3(1):14--30, 1982.

\bibitem{DFS05c}
Vadim Drabkin, Roy Friedman, and Marc Segal.
\newblock Efficient byzantine broadcast in wireless ad-hoc networks.
\newblock In {\em DSN}, pages 160--169. IEEE Computer Society, 2005.

\bibitem{DMT10ca}
Swan Dubois, Toshimitsu Masuzawa, and S\'{e}bastien Tixeuil.
\newblock The impact of topology on byzantine containment in stabilization.
\newblock In {\em Proceedings of DISC 2010}, {L}ecture {N}otes in {C}omputer
  {S}cience, Boston, Massachusetts, USA, September 2010. {S}pringer {B}erlin /
  {H}eidelberg.

\bibitem{DMT10cd}
Swan Dubois, Toshimitsu Masuzawa, and S\'{e}bastien Tixeuil.
\newblock On byzantine containment properties of the min+1 protocol.
\newblock In {\em Proceedings of SSS 2010}, {L}ecture {N}otes in {C}omputer
  {S}cience, New York, NY, USA, September 2010. {S}pringer {B}erlin /
  {H}eidelberg.

\bibitem{DMT11j}
Swan Dubois, Toshimitsu Masuzawa, and S\'{e}bastien Tixeuil.
\newblock Bounding the impact of unbounded attacks in stabilization.
\newblock {\em IEEE Transactions on Parallel and Distributed Systems (TPDS)},
  2011.

\bibitem{DMT11cb}
Swan Dubois, Toshimitsu Masuzawa, and S{\'e}bastien Tixeuil.
\newblock Maximum metric spanning tree made byzantine tolerant.
\newblock In David Peleg, editor, {\em Proceedings of DISC 2011}, Lecture Notes
  in Computer Science (LNCS), Rome, Italy, September 2011. Springer Berlin /
  Heidelberg.

\bibitem{K04c}
Chiu-Yuen Koo.
\newblock Broadcast in radio networks tolerating byzantine adversarial
  behavior.
\newblock In Soma Chaudhuri and Shay Kutten, editors, {\em PODC}, pages
  275--282. ACM, 2004.

\bibitem{LSP82j}
Leslie Lamport, Robert~E. Shostak, and Marshall~C. Pease.
\newblock The byzantine generals problem.
\newblock {\em ACM Trans. Program. Lang. Syst.}, 4(3):382--401, 1982.

\bibitem{MMR03j}
D.~Malkhi, Y.~Mansour, and M.K. Reiter.
\newblock Diffusion without false rumors: on propagating updates in a
  {Byzantine} environment.
\newblock {\em Theoretical Computer Science}, 299(1--3):289--306, April 2003.

\bibitem{MRRS01c}
D.~Malkhi, M.~Reiter, O.~Rodeh, and Y.~Sella.
\newblock Efficient update diffusion in byzantine environments.
\newblock In {\em The 20th {IEEE} Symposium on Reliable Distributed Systems
  ({SRDS} '01)}, pages 90--98, Washington - Brussels - Tokyo, October 2001.
  IEEE.

\bibitem{MT06cb}
Toshimitsu Masuzawa and S\'{e}bastien Tixeuil.
\newblock Bounding the impact of unbounded attacks in stabilization.
\newblock In Ajoy~Kumar Datta and Maria Gradinariu, editors, {\em SSS}, volume
  4280 of {\em Lecture Notes in Computer Science}, pages 440--453. Springer,
  2006.

\bibitem{MT07j}
Toshimitsu Masuzawa and S\'{e}bastien Tixeuil.
\newblock Stabilizing link-coloration of arbitrary networks with unbounded
  byzantine faults.
\newblock {\em International Journal of Principles and Applications of
  Information Science and Technology (PAIST)}, 1(1):1--13, December 2007.

\bibitem{MS03j}
Y.~Minsky and F.B. Schneider.
\newblock Tolerating malicious gossip.
\newblock {\em Distributed Computing}, 16(1):49--68, 2003.

\bibitem{MU05b}
M.~Mitzenmacher and E.~Upfal.
\newblock {\em Probability and Computing}.
\newblock Cambridge University Press, 2005.

\bibitem{NA02c}
Mikhail Nesterenko and Anish Arora.
\newblock Tolerance to unbounded byzantine faults.
\newblock In {\em 21st Symposium on Reliable Distributed Systems (SRDS 2002)},
  pages 22--29. IEEE Computer Society, 2002.

\bibitem{NT09j}
Mikhail Nesterenko and S\'{e}bastien Tixeuil.
\newblock Discovering network topology in the presence of byzantine nodes.
\newblock {\em IEEE Transactions on Parallel and Distributed Systems (TPDS)},
  20(12):1777--1789, December 2009.

\bibitem{PP05j}
Andrzej Pelc and David Peleg.
\newblock Broadcasting with locally bounded byzantine faults.
\newblock {\em Inf. Process. Lett.}, 93(3):109--115, 2005.

\bibitem{SOM05c}
Yusuke Sakurai, Fukuhito Ooshita, and Toshimitsu Masuzawa.
\newblock A self-stabilizing link-coloring protocol resilient to byzantine
  faults in tree networks.
\newblock In {\em Principles of Distributed Systems, 8th International
  Conference, OPODIS 2004}, volume 3544 of {\em Lecture Notes in Computer
  Science}, pages 283--298. Springer, 2005.

\end{thebibliography}

\end{document}